\newcommand{\ket}[1]{{\left\vert{#1}\right\rangle}}
\newcommand{\braket}[2]{{\langle {#1}\!\mid\!{#2} \rangle}}
\newcommand{\Hilbert}{{\cal H}}
\newtheorem{definition}{Definition}[section]
\newtheorem{lemma}{Lemma}[section]
\newenvironment{proof}%
{\medskip
\noindent {\it Proof.}
}{$\Box$}
\begin{document}

\title{Quantum Hashing for Finite Abelian Groups}

\author{Alexander Vasiliev\thanks{Kazan Federal University}}


\date{}

\maketitle

\begin{abstract}
We propose a generalization of the quantum hashing technique based on the notion of the small-bias
sets. These sets have proved useful in different areas of computer science, and here their
properties give an optimal construction for succinct quantum presentation of elements of any finite
abelian group, which can be used in various computational and cryptographic scenarios. The known
quantum fingerprinting schemas turn out to be the special cases of the proposed quantum hashing for
the corresponding abelian group.
%
\end{abstract}

\section{Introduction}

Hashing is a necessary tool in a bag of tricks of every computer scientist. This term is believed
to be more than 60 years old and during its long history it has had a variety of useful
applications, which include cryptographic protocols, fast search, and data integrity check.

Recently, we have proposed a quantum version of this technique
\cite{Ablayev-Vasiliev:2014:Crypto-Q-Hashing}, which can also be useful in similar scenarios. For
instance, it is a suitable quantum one-way function that can be used in the quantum digital
signature protocol by Gottesman and Chuang \cite{GC:2001:Quantum-Digital-Signatures}. It can also
be used in different quantum computational models as a basis for efficient algorithms
\cite{Ablayev-Vasiliev:2014:QHashing-QOBDD} and communication protocols
\cite{Vasiliev:2015:QHashing-For-Communications}.


The classical hashing is deeply connected with error-correcting codes, i.e. as shown by Stinson
\cite{Stinson:1996:Hashing-Codes} they can be built from each other. The special case of
error-correcting codes called $\varepsilon$-balanced codes is related to another important
combinatorial object known as $\varepsilon$-biased sets \cite{Naor-Naor:1990:Small-Bias-Spaces},
which have applications in different areas of theoretical computer science, such as
derandomization, graph theory, number theory, etc. There are several known explicit constructions
of $\varepsilon$-balanced error-correcting codes \cite{Naor-Naor:1990:Small-Bias-Spaces},
\cite{Alon-et-al:1992:Almost-Independent-Random-Variables},
\cite{Ben-Aroya-Ta-Shma:2009:Small-Bias-Sets-from-Codes} that give rise to corresponding
$\varepsilon$-biased sets.

%
%
%
%
%
%
%
%

In this paper we show that $\varepsilon$-biased sets can be used to construct quantum hash
functions that have all the necessary cryptographic properties.

\section{Preliminaries}

The construction of quantum hashing in this paper relies on the notion of the $\varepsilon$-biased
sets. We use the definition given in
\cite{Chen-Moore-Russell:2013:Small-Bias-Sets-for-Nonabelian-Groups}.

Let $G$ be a finite abelian group and let $\chi_a$ be the characters of $G$, indexed by $a\in G$.

\begin{definition}
A set $S\subseteq G$ is called $\varepsilon$-biased, if for any nontrivial character $\chi_a$
\[
\frac{1}{|S|}\left|\sum\limits_{x\in S}\chi_a(x)\right|\leq\varepsilon.
\]
\end{definition}

It follows from the Alon-Roichman theorem \cite{Alon-Roichman:1994:Random-Cayley-graphs} that a set
$S$ of $O(\log|G|/\varepsilon^2)$ elements selected uniformly at random from $G$ is
$\varepsilon$-biased with high probability. The paper
\cite{Chen-Moore-Russell:2013:Small-Bias-Sets-for-Nonabelian-Groups} gives explicit constructions
of such sets thus derandomizing the Alon-Roichman theorem.


\section{Quantum Hashing}

Let $G$ be a finite abelian group with characters $\chi_a$, indexed by $a\in G$. Let $S\subseteq G$
be an $\varepsilon$-biased set for some $\varepsilon\in(0,1)$.

\begin{definition}\label{quantum-hash-function-definition}
We define a quantum hash function $\psi_{S}: G \to
(\Hilbert^2)^{\otimes\log{|S|}}$ as following:
\[
\ket{\psi_{S} (a)} = \frac{1}{\sqrt{|S|}}\sum\limits_{x\in S}\chi_a(x) \ket{x}.
\]
\end{definition}

The above function given an element $a\in G$ creates its quantum hash, which is a quantum state of
$\log{|S|}$ qubits. As mentioned earlier $S$ can be of order $O(\log|G|/\varepsilon^2)$, and thus
quantum hashing transforms its inputs into exponentially smaller outputs. That is, for any $a\in G$
represented by $\log|G|$ bits the number of qubits in its quantum hash would be
$\log{S}=O(\log\log|G|-\log\varepsilon)$.

The cryptographic properties of the hashing from Definition \ref{quantum-hash-function-definition}
are entirely determined by the $\varepsilon$-biased set $S\subseteq G$.

In particular all pairwise inner products of different hash codes (which is also the measure of
collision resistance \cite{Ablayev-Vasiliev:2014:Crypto-Q-Hashing}) are bounded by $\varepsilon$ by
the following Lemma.

\begin{lemma}
\[
\left|\braket{\psi_S(a_1)}{\psi_S(a_2)}\right|=
\frac{1}{|S|}\left|\sum\limits_{x\in S}\chi^*_{a_1}(x)\chi_{a_2}(x)\right|\leq\varepsilon,
\]
whenever $a_1\neq a_2$.
\end{lemma}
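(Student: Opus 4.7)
The plan is to unfold the inner product directly from Definition~\ref{quantum-hash-function-definition} and reduce it to a single character sum to which the $\varepsilon$-biased property applies.

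First I would expand $\braket{\psi_S(a_1)}{\psi_S(a_2)}$ as a double sum over $S\times S$. Because the states $\ket{x}$ for distinct $x\in S$ are orthonormal computational basis states, only the diagonal terms $x=y$ survive, yielding
\[
\braket{\psi_S(a_1)}{\psi_S(a_2)} = \frac{1}{|S|}\sum_{x\in S}\chi^*_{a_1}(x)\chi_{a_2}(x),
\]
which already establishes the equality stated in the lemma. This is a pure book-keeping step and uses nothing beyond the definition.

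Next I would use the fact that the characters of a finite abelian group $G$ form a group themselves under pointwise multiplication, isomorphic to $G$, with $\chi^*_{a_1} = \chi_{-a_1}$ and $\chi_{-a_1}\chi_{a_2} = \chi_{a_2-a_1}$. Hence for every $x\in S$ one has $\chi^*_{a_1}(x)\chi_{a_2}(x) = \chi_{a_2-a_1}(x)$, and since $a_1\neq a_2$ the index $a_2-a_1$ is nonzero, so $\chi_{a_2-a_1}$ is a nontrivial character of $G$.

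Finally, I would apply the hypothesis that $S$ is $\varepsilon$-biased: by definition, for the nontrivial character $\chi_{a_2-a_1}$,
\[
\frac{1}{|S|}\left|\sum_{x\in S}\chi_{a_2-a_1}(x)\right|\leq \varepsilon,
\]
which, combined with the previous step, yields the claimed bound. The only point requiring any care is the character arithmetic in the second step; everything else is direct bookkeeping and a single invocation of the $\varepsilon$-biased definition, so I do not expect any genuine obstacle.
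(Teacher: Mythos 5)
Your proposal is correct and follows essentially the same route as the paper: expand the inner product via orthonormality of the basis states, observe that $\chi^*_{a_1}\chi_{a_2}$ is a nontrivial character of $G$ (the paper argues this directly, you identify it as $\chi_{a_2-a_1}$ via the dual-group structure, which is the same fact), and apply the $\varepsilon$-biased definition. No gaps.
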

\begin{proof}
Let $\chi_{a_1}(x), \chi_{a_2}(x)$ be two different characters of $G$. Then $\chi^*_{a_1}(x)$ is
also a character of $G$, and so is the following function $\chi(x)=\chi^*_{a_1}(x)\chi_{a_2}(x)$.

$\chi(x)$ is nontrivial character of $G$, since $\chi_{a_1}(x)\not\equiv\chi_{a_2}(x)$ and
$\chi(x)=\chi^*_{a_1}(x)\chi_{a_2}(x)\not\equiv\chi^*_{a_1}(x)\chi_{a_1}(x)\equiv\textbf{1}$, where
$\textbf{1}$ is a trivial character of $G$.

Thus, Lemma follows from the definition of an $\varepsilon$-biased set
\[
\left|\braket{\psi_S(a_1)}{\psi_S(a_2)}\right|=
\frac{1}{|S|}\left|\sum\limits_{x\in S}\chi^*_{a_1}(x)\chi_{a_2}(x)\right|=\frac{1}{|S|}\left|\sum\limits_{x\in S}\chi(x)\right|\leq\varepsilon.
\]
\end{proof}

Irreversibility of $\psi_S$ is proved via the Holevo theorem and the fact that a quantum hash is
exponentially smaller than its preimage.

The size of the quantum hash above is asymptotically optimal because of the known lower bound by
Buhrman et al. \cite{Buhrman:2001:Fingerprinting} for the size of the sets of
pairwise-distinguishable states: to construct a set of $2^k$ quantum states with pairwise inner
products below $\varepsilon$ one will need at least $\Omega(\log(k/\varepsilon))$ qubits. This
implies the lower bound on the size of quantum hash of $\Omega(\log\log{|G|}-\log{\varepsilon})$.

In the next sections we give a more detailed look on the quantum hashing for specific finite
abelian groups. In particular, we are interested in hashing binary strings and thus it is natural
to consider $G=\mathbb{Z}^n_2$ 
and
$G=\mathbb{Z}_{2^n}$ (or, more generally, any cyclic group $\mathbb{Z}_{q}$).

\subsection{Hashing the Elements of the Boolean Cube}

For $G=\mathbb{Z}_2^n$ its characters can be written in the form
$\chi_a(x)=(-1)^{(a,x)}$, and quantum hash function is the following
\[
\ket{\psi_{S} (a)} = \frac{1}{\sqrt{|S|}}\sum\limits_{x\in S}(-1)^{(a,x)} \ket{x}.
\]

The resulting hash function is exactly the quantum fingerprinting by Buhrman et al.
\cite{Buhrman:2001:Fingerprinting}, once we consider an error-correcting code, whose matrix is
built from the elements of $S$. Indeed, as stated in
\cite{Ben-Aroya-Ta-Shma:2009:Small-Bias-Sets-from-Codes} an $\varepsilon$-balanced error-correcting
code can be constructed out of an $\varepsilon$-biased set. Thus, the inner product $(a,x)$ in the
exponent is equivalent to the corresponding bit of the codeword, and altogether this gives the
quantum fingerprinting function, that stores information in the phase of quantum states
\cite{Wolf:2001:PhD}.

%

\subsection{Hashing the Elements of the Cyclic Group}

For $G=\mathbb{Z}_q$ $\chi_a(x)=e^{\frac{2\pi i a x}{q}}$, and
quantum hash function is given by
\[
\ket{\psi_{S} (a)} = \frac{1}{\sqrt{|S|}}\sum\limits_{x\in S}e^{\frac{2\pi i a x}{q}} \ket{x}.
\]
The above quantum hash function is essentially equivalent to the one we have defined earlier in
\cite{Ablayev-Vasiliev:2014:Crypto-Q-Hashing}.

\subsubsection*{Acknowledgments.}

The work is performed according to the Russian Government Program of Competitive Growth of Kazan
Federal University. Work was in part supported by the Russian Foundation for Basic Research (under
the grants 14-07-00878, 15-37-21160).

\bibliography{references}

\end{document}